\newcommand{\SQS}{\ensuremath{\mathrm{SQS}}}
\newcommand{\STS}{\ensuremath{\mathrm{STS}}}
\newcommand{\Aut}{\ensuremath{\mathrm{Aut}}}
\newcommand{\supp}{\ensuremath{\mathrm{supp}}}
\newcommand{\wt}{\ensuremath{\mathrm{wt}}}
\newcommand{\F}{\ensuremath{\mathbf{F}}}
\newtheorem{Thm}{Theorem}
\newtheorem{Lem}{Lemma}
\newtheorem{Cor}{Corollary}
\begin{document}

\title{Reconstructing Extended Perfect Binary One-Error-Correcting
Codes from Their Minimum Distance Graphs}
\author{Ivan Yu. Mogilnykh,
Patric R. J. {\"O}sterg{\aa}rd,
Olli Pottonen, Faina I. Solov'eva
\thanks{This work was supported in part by the Graduate School in
Electronics, Telecommunication and Automation and by
the Academy of Finland, Grant Numbers 107493 and 110196.}%
\thanks{I. Yu. Mogilnykh and F. I. Solov'eva are with the
Sobolev Institute of Mathematics and Novosibirsk State University,
Novosibirsk, Russia (e-mail: ivmog84@gmail.com, sol@math.nsc.ru).}
\thanks{P. R. J. {\"O}sterg{\aa}rd and O. Pottonen are with the
Department of Communications and Networking, Helsinki University of
Technology TKK, P.O.\ Box 3000, FI-02015 TKK, Finland (e-mail:
patric.ostergard@tkk.fi, olli.pottonen@tkk.fi).}}

\maketitle
\begin{abstract}
  The minimum distance graph of a code has the codewords as vertices
  and edges exactly when the Hamming distance between two codewords
  equals the minimum distance of the code. A constructive proof for
  reconstructibility of an extended perfect binary one-error-correcting
  code from its minimum distance graph is presented. Consequently,
  inequivalent such codes have nonisomorphic minimum distance graphs.
  Moreover, it is shown that the automorphism
  group of a minimum distance graph is isomorphic to that of the
  corresponding code.
\end{abstract}
\begin{keywords}
  minimum distance graph, extended perfect binary code,
  reconstructibility, weak isometry
\end{keywords}

\section{Introduction}

A binary code of length $n$ is a subset of $\F_2^n$, where $\F_2 =
\{0, 1\}$ is the field of two elements. Throughout this work, we use
``code'' in the meaning of ``binary code''.  The \emph{support}
$\supp(\mathbf{x})$ of a word $\mathbf{x} = (x_1,x_2,\ldots ,x_n)$ is
the set of its nonzero coordinates, the \emph{weight}
$\wt(\mathbf{x})$ of $\mathbf{x}$ is the number of nonzero
coordinates, and the \emph{Hamming distance} $d_H(\mathbf{x},
\mathbf{y})$ is the number of coordinates in which the words
$\mathbf{x}$ and $\mathbf{y}$ differ. Formally, $\supp(\mathbf{x}) :=
\{i : x_i = 1\}$, $\wt(\mathbf{x}) := |\supp(\mathbf{x})|$ and
$d_H(\mathbf{x}, \mathbf{y}) := \wt(\mathbf{x} - \mathbf{y})$.

The \emph{minimum distance} of a code is the minimum Hamming distance
between any pair of distinct codewords.  For a code with minimum
distance $d$, the balls of radius $r = \lfloor (d-1)/2 \rfloor$
centered around the codewords are nonintersecting and such a code is
called an \emph{$r$-error-correcting code.} If the balls cover the
entire ambient space, the code is called \emph{perfect}, or more
specifically, \emph{$r$-perfect}. With one exception (the binary Golay
code), all nontrivial perfect binary codes have $d=3$, $n = 2^m-1$.

A permutation $\pi$ acts on a codeword by permuting the coordinates.
A pair $(\pi, {\bf z})$ acts on a codeword ${\bf x}$ as $(\pi, {\bf
  z})({\bf x}) = {\bf z} + \pi({\bf x})$.  Two codes are
\emph{equivalent} if the action of such a pair on the codewords of one
code produces the codewords of the other.  The set of all such pairs
that map a code onto itself form the \emph{automorphism group} of the
code.

A \emph{Steiner system} $S(t, k, v)$ is a set of $v$ \emph{points}
together with a collection of \emph{blocks}, each consisting of $k$
points, such that any $t$ points occur in a unique block. The Steiner
systems $S(2, 3, v)$ and $S(3, 4, v)$ are called \emph{Steiner triple
  systems} and \emph{Steiner quadruple systems}, respectively, or
$\STS(v)$ and $\SQS(v)$ for short. If $C$ is a $1$-perfect code of
length $n$ and $\mathbf{x} \in C$, then the blocks $\{\supp(\mathbf{x}
- \mathbf{y}) : d_H(\mathbf{x}, \mathbf{y}) = 3,\ \mathbf{y} \in C\}$
form an $\STS(n)$, called the \emph{neighborhood STS} of $\mathbf{x}$.
Similarly, if $C$ is an extended $1$-perfect code, then each
$\mathbf{x} \in C$ has an \emph{neighborhood SQS} with the block set
$\{\supp(\mathbf{x} - \mathbf{y}) : d_H(\mathbf{x},\ \mathbf{y}) = 4,\
\mathbf{y} \in C\}$.  The \emph{block graph} of an $S(t, k, v)$ has
the blocks of the design as vertices, with edges incident to
intersecting blocks.

The \emph{minimum distance graph} of a code with minimum distance $d$
has the codewords as vertices and edges between codewords with Hamming
distance $d$.  In the rest of the paper we consider such minimum
distance graphs.  Note that the distance between codewords is then the
distance between the corresponding vertices in the graph; this is not
to be confused with the Hamming distance.

Phelps and LeVan~\cite{PL99} asked whether 1-perfect codes with
isomorphic minimum distance graphs are always equivalent, and this
question was answered in the affirmative by Avgustinovich \cite{A01},
building on earlier work by Avgustinovich and others
\cite{A95,SAHH98}; in fact, the result was announced already in
\cite{A95} for lengths $n \geq 31$, but without details.

We start off in Section~\ref{sec:iso} by finalizing a proof that
extended 1-perfect codes with isomorphic minimum distance graphs are
equivalent for $n \geq 256$. The detailed treatment in the rest of the
paper makes it possible to handle codes of shorter lengths. We prove
in Section~\ref{sec:extended} the stronger result that any extended
1-perfect code can be reconstructed from its minimum distance graph,
and, in Section~\ref{sec:perfect}, show how this implies an analogous
result for 1-perfect codes. In Section~\ref{sec:autom} we prove that
the automorphism groups of these codes are isomorphic to the
automorphism groups of their minimum distance graphs for lengths $n
\geq 15$.  Section~\ref{sec:concl} concludes the paper.

\section{Code Isometry and Equivalence}

\label{sec:iso}

A bijection $I: C_1 \rightarrow C_2$ is called an \emph{isometry} if
$d_H(\mathbf{x},\mathbf{y}) = d_H(I(\mathbf{x}),I(\mathbf{y}))$ for
all $\mathbf{x},\mathbf{y} \in C_1$. Moreover, such a mapping is a
\emph{weak isometry} if $d_H(\mathbf{x},\mathbf{y}) = d$ iff
$d_H(I(\mathbf{x}),I(\mathbf{y})) = d$, where $d$ is the minimum
distance of the codes $C_1$ and $C_2$.

We may now rephrase the question by Phelps and LeVan \cite{PL99} in
the defined terms: Are weakly isometric 1-perfect codes always
equivalent?  The idea of the proof completed in \cite{A01} is to
combine a proof that weakly isometric such codes are isometric with a
proof (from \cite{A95,SAHH98}) that isometric such codes are
equivalent.  We may act analogously for extended 1-perfect codes, and
use a result from \cite{SA03} that isometric such codes are equivalent
for lengths $n \geq 256$.  Then it only remains to prove that weakly
isometric codes are isometric, which can be done for arbitrary
lengths.

\begin{Thm}
\label{thm:wi}
Weakly isometric extended 1-perfect codes are isometric.
\end{Thm}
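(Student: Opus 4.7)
The plan is to show that any weak isometry $I\colon C_1\to C_2$ is automatically an isometry by proving that the Hamming distance between two codewords of an extended $1$-perfect code is recoverable from the minimum distance graph, since $I$ is precisely an isomorphism of such graphs. After translating (which is an isometry and preserves the distance-$4$ relation) we may assume $x=\mathbf{0}\in C_1$ and $I(x)=\mathbf{0}\in C_2$, so $d_H(x,y)=\wt(y)$, and the neighbors of $\mathbf{0}$ in the minimum distance graph $G$ correspond to the blocks of the neighborhood $\SQS(n+1)$.

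For the small cases $\wt(y)\in\{4,6,8\}$ I would use the common-neighbor count $\lambda(x,y)=|N_G(x)\cap N_G(y)|$: a common neighbor is a weight-$4$ codeword $z$ with $|\supp(z)\cap\supp(y)|=\wt(y)/2$. Standard $\SQS(n+1)$ counting yields $\lambda_4=3(n-3)$. For $\wt(y)=6$, no weight-$4$ codeword can have support in $\supp(y)$ (else two codewords would be at Hamming distance $2$), so each of the $\binom{6}{3}=20$ three-subsets of $\supp(y)$ contributes one common neighbor, giving $\lambda_6=20$. For $\wt(y)=8$, $\lambda_8$ is bounded by $\binom{8}{3}/\binom{4}{3}=14$. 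For $n\ge 15$ these three values are pairwise distinct, while $\lambda(x,y)=0$ for $\wt(y)\ge 10$, so the small-distance cases are resolved.

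For $\wt(y)\ge 10$ I would proceed by strong induction on $d=d_H(x,y)$, assuming that both $I$ and $I^{-1}$ preserve every Hamming distance strictly less than $d$. Since each $3$-subset of $\supp(y)$ lies in a unique block of the neighborhood SQS, there exists $z\in N_G(x)$ with $d_H(y,z)\in\{d-4,d-2\}$; the triangle inequality combined with induction yields $d_H(I(x),I(y))\le d+2$, while induction applied to $I^{-1}$ yields $d_H(I(x),I(y))\ge d$.

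The main obstacle will be excluding the residual value $d+2$. Suppose $d_H(I(x),I(y))=d+2$. Then the multiset of distances from $I(y)$ to $N_{G_2}(I(x))$ contains no entries below $d-2$, so preservation of counts at values below $d$ forces $n_{d-4}:=|\{z\in N_G(x):d_H(y,z)=d-4\}|=0$. The block identity in the neighborhood SQS then yields $n_{d-2}=\binom{d}{3}-4n_{d-4}=\binom{d}{3}$, whereas the analogous identity applied to $I(y)$ bounds the preserved count at $d-2$ by $\binom{d+2}{3}/4$, which is strictly less than $\binom{d}{3}$ for $d\ge 6$. This contradiction forces $d_H(I(x),I(y))=d$ and closes the induction.
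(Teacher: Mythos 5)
Your overall strategy---recovering every Hamming distance from the minimum distance graph by induction, using the fact that each triple of $\supp(\mathbf{x}-\mathbf{y})$ lies in a unique block of a neighborhood SQS, so that a pair at distance $d$ with no neighbor at distance $d-4$ has exactly ${d \choose 3}$ neighbors at distance $d-2$, whereas a pair at distance $d+2$ has at most ${d+2 \choose 3}/4$ of them---is essentially the paper's own argument; your inequality ${d \choose 3} > {d+2 \choose 3}/4$ is the paper's ${i+2 \choose 3} > {i+4 \choose 3}/4$ with $d=i+2$, merely repackaged as an induction on the isometry property of $I$ and $I^{-1}$ instead of a layer-by-layer recognition of distance classes around a fixed codeword. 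The induction step as you state it for $d\ge 10$ is correct.

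The genuine gap is the base case at distance $8$. Your count $\lambda_6=20$ is exact and correct, but for distance $8$ you establish only the upper bound $\lambda_8\le 14$ and give no lower bound; nothing in your argument excludes $\lambda_8=0$, which is precisely the value attained by every pair at distance $\ge 10$. There is no evident reason why two codewords of an arbitrary (possibly nonlinear) extended $1$-perfect code at Hamming distance $8$ must have a common neighbor, i.e., why the neighborhood SQS must contain a block inside the $8$-set $\supp(\mathbf{x}-\mathbf{y})$, so the claim that ``the small-distance cases are resolved'' is unjustified for $8$ versus $\ge 10$; and since your induction only starts at $d=10$, preservation of distance $8$ is assumed but never proved. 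The repair stays entirely within your own machinery: run the induction step at $d=8$ as well. Below $8$ only distances $4$ and $6$ are needed ($4$ by the definition of weak isometry, $6$ by your exact count $\lambda_6=20$); if $d_H(I(\mathbf{x}),I(\mathbf{y}))$ were $10$, your argument forces $n_4=0$, hence $n_6={8 \choose 3}=56$, while at most ${10 \choose 3}/4=30$ neighbors of $I(\mathbf{x})$ can lie at distance $6$ from $I(\mathbf{y})$, a contradiction. This is exactly how the paper disposes of this case, as the instance $i=6$ of its general step. With that adjustment your proof is complete and coincides in substance with the paper's.
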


\begin{proof}
  We show that one is able to deduce the Hamming distance between any
  two codewords, given the minimum distance graph. Consider an
  arbitrary codeword $\mathbf{x}$. The codewords $\mathbf{y}$ with
  $d_H(\mathbf{x},\mathbf{y})=4$ are given by the minimum distance
  graph. Having identified all codewords $\mathbf{y}$ with
  $d_H(\mathbf{x},\mathbf{y})\leq i$, we need to distinguish between
  the cases $d_H(\mathbf{x}, \mathbf{z}) = i+2$ and $d_H(\mathbf{x},
  \mathbf{z}) = i+4$ for a codeword $\mathbf{z}$ in order to proceed
  with induction.  If $\mathbf{z}$ has a neighbour $\mathbf{v}$ with
  $d_H(\mathbf{x}, \mathbf{v}) = i-2$, then $d_H(\mathbf{x},
  \mathbf{z}) = i+2$. All remaining codewords $\mathbf{z}$ with
  $d_H(\mathbf{x}, \mathbf{z}) = i+2$ have ${i+2 \choose 3}$
  neighbours that are at Hamming distance $i$ from $\mathbf{x}$,
  whereas those codewords $\mathbf{z}$ with $d_H(\mathbf{x},
  \mathbf{z}) = i+4$ have at most ${i+4 \choose 3}/4$ such neighbors
  (consider respectively the triples and quadruples of
  $\supp(\mathbf{x}-\mathbf{z})$ in the neighborhood SQS of
  $\mathbf{z}$). For $i \ge 4$ we have ${i+2 \choose 3} > {i+4 \choose
    3}/4$.
\end{proof}

\begin{Thm}
  Weakly isometric extended 1-perfect codes are equivalent for lengths
  $n \geq 256$.
\end{Thm}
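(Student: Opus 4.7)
The plan is to chain together two results: the freshly proved Theorem~\ref{thm:wi} of this section, and the external result of~\cite{SA03} that handles the gap between isometry and equivalence in the required length range. Neither step requires any new combinatorial content, so the ``proof'' is really just a composition; the only reason the theorem is nontrivial is that the two ingredients are nontrivial.

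Concretely, I would start with two weakly isometric extended 1-perfect codes $C_1$ and $C_2$ of length $n \geq 256$, and a weak isometry $I: C_1 \to C_2$. By Theorem~\ref{thm:wi}, $I$ is in fact an isometry, i.e.\ $d_H(\mathbf{x},\mathbf{y}) = d_H(I(\mathbf{x}), I(\mathbf{y}))$ for all $\mathbf{x},\mathbf{y} \in C_1$. The result of Solov'eva and Avgustinovich in~\cite{SA03} asserts that for $n \geq 256$, any two isometric extended 1-perfect binary codes are equivalent in the sense defined in the introduction, i.e.\ related by a pair $(\pi, \mathbf{z})$. Applying that result to $C_1$ and $C_2$ yields equivalence, which is what the theorem claims.

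The main obstacle, if any, is really a bookkeeping one: verifying that the ``isometry'' and ``equivalence'' notions used in~\cite{SA03} coincide with those set up here (same ambient space $\F_2^n$, same Hamming metric, same action $(\pi,\mathbf{z})(\mathbf{x}) = \mathbf{z} + \pi(\mathbf{x})$), and that the length threshold $n \geq 256$ from~\cite{SA03} is indeed the binding constraint, as opposed to any hidden requirement in Theorem~\ref{thm:wi} itself. In this case Theorem~\ref{thm:wi} was proved for arbitrary length (the inductive argument only needs $i \geq 4$, which is immediate once one starts from $d=4$), so no extra length restriction arises on that side, and the $n \geq 256$ bound is inherited cleanly from~\cite{SA03}. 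I would therefore present the argument as a two-line proof: ``By Theorem~\ref{thm:wi}, the codes are isometric; by~\cite{SA03}, isometric extended 1-perfect codes of length $n \geq 256$ are equivalent.''
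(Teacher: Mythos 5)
Your proposal is correct and is exactly the paper's argument: the paper's proof consists of the single line ``Follows from Theorem~\ref{thm:wi} and \cite{SA03},'' i.e.\ the same composition of the weak-isometry-to-isometry step with the metrical rigidity result of \cite{SA03} for $n \geq 256$. Your additional remarks about checking that the notions and the length threshold match are sensible bookkeeping but introduce nothing beyond the paper's route.
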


\begin{proof}
  Follows from Theorem~\ref{thm:wi} and \cite{SA03}.
\end{proof}

\section{Reconstructing Extended 1-Perfect Codes}

\label{sec:extended}

A \emph{clique} in a graph is a set of mutually adjacent vertices.
The idea of utilizing maximum cliques in reconstruction has earlier
been used by Spielman~\cite{S96}; see also~\cite{KO04}. It follows
from a result by Rands \cite{R82} that the maximum cliques in the
block graph of a Steiner system can be used to identify the points of
the design whenever the number of points ($v$) exceeds a certain value
that depends only on the parameters $k$ and $t$. Unfortunately, the
bound derived in \cite{R82} for the threshold value is too large for
the smallest cases that we want to handle, so we need to carry out a
more detailed treatment.

In the preparation for a reconstructibility proof for extended
1-perfect codes, Theorem~\ref{Thm:extended}, we prove three lemmata.

\begin{Lem}\label{lem:dist6}
  The codewords with Hamming distance $6$ can be recognized from the
  minimum distance graph of an extended $1$-perfect code.
\end{Lem}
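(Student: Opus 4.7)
The plan is to identify the Hamming-distance-$6$ pairs by counting common neighbors in the minimum distance graph. Since the minimum distance of an extended $1$-perfect code is $4$ and all pairwise Hamming distances between codewords are even, the edges of the graph are exactly the distance-$4$ pairs. For any fixed codeword $\mathbf{x}$, the triangle inequality forces every codeword $\mathbf{z}$ at graph distance two from $\mathbf{x}$ to have $d_H(\mathbf{x},\mathbf{z}) \in \{6,8\}$, so it suffices to separate these two cases using graph information alone.

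To do so I would exploit the neighborhood SQS at $\mathbf{x}$: a common neighbor $\mathbf{y}$ of $\mathbf{x}$ and $\mathbf{z}$ corresponds to a block of this SQS that meets the support $\supp(\mathbf{z}-\mathbf{x})$ in a prescribed number of points, which a short weight calculation pins down. If $d_H(\mathbf{x},\mathbf{z}) = 6$, such a block must meet the $6$-set $\supp(\mathbf{z}-\mathbf{x})$ in exactly three points and lie outside it in the fourth; since each triple lies in a unique block of an SQS, one obtains exactly $\binom{6}{3}=20$ common neighbors. If $d_H(\mathbf{x},\mathbf{z}) = 8$, the block must lie entirely inside the $8$-set $\supp(\mathbf{z}-\mathbf{x})$, and since each such block accounts for $\binom{4}{3}=4$ of the $\binom{8}{3}=56$ triples of the $8$-set, at most $56/4 = 14$ common neighbors exist.

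Because $20 > 14$, the distance-$6$ partners of $\mathbf{x}$ are precisely those vertices at graph distance two from $\mathbf{x}$ that share more than fourteen common neighbors with $\mathbf{x}$ in the graph, which is a purely graph-theoretic condition. I do not anticipate a real obstacle, as this is in essence the $i=4$ base case of the induction already used in the proof of Theorem~\ref{thm:wi}; the only thing one must be careful about is to phrase the argument so that the neighborhood SQS at $\mathbf{x}$ is used only through the two counting bounds, not as something assumed already to be reconstructible from the graph.
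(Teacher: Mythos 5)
Your argument is correct and is essentially the paper's own proof: the paper simply cites the $i=4$ base case of the counting argument in Theorem~\ref{thm:wi} (a distance-$6$ pair has $\binom{6}{3}=20$ common neighbors, a distance-$8$ pair at most $\binom{8}{3}/4=14$), which is exactly the criterion you derive. Your closing remark about using the neighborhood SQS only through the counting bounds, not as a reconstructed object, is the right way to phrase it and matches the paper's logic.
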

\begin{proof}
  Follows from the proof of Theorem~\ref{thm:wi}.
\end{proof}

\begin{Lem}\label{lem:cliquesize}
  If $Q$ is a clique in the block graph of an $\SQS(v)$, $v \ge 16$,
  such that there is no point that occurs in every block of $Q$, then
  $|Q| < (v-1)(v-2)/6$.
\end{Lem}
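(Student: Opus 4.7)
The plan is proof by contradiction. Suppose $|Q| \ge (v-1)(v-2)/6$ and that no point of the SQS is contained in every block of $Q$. Write $r_p := |\{B \in Q : p \in B\}|$ and $M := \max_p r_p$.

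First I would establish a uniform bound on $r_p$. For any point $p$, the no-common-point hypothesis guarantees a block $B^* \in Q$ with $p \notin B^*$, and every block of $Q$ through $p$ must meet $B^*$. Since each of the four points of $B^*$ lies in $(v-2)/2$ blocks of the SQS through $p$, we obtain $r_p \le 4 \cdot (v-2)/2 = 2(v-2)$ for every point $p$.

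Next I would split on how many blocks of $Q$ avoid the maximum-replication point $p^*$. If only one block of $Q$ misses $p^*$, then $|Q| = M + 1 \le 2v - 3$, which for $v \ge 14$ is strictly smaller than $(v-1)(v-2)/6$, giving an immediate contradiction. Otherwise at least two distinct blocks $B_1^*, B_2^* \in Q$ miss $p^*$. Passing to the link $\STS(v-1)$ at $p^*$---whose triples are $\{B \setminus \{p^*\} : B \in Q,\ p^* \in B\}$---each of the $M$ such triples must simultaneously intersect the two 4-sets $B_1^* \setminus \{p^*\}$ and $B_2^* \setminus \{p^*\}$. A short case analysis on $|B_1^* \cap B_2^*| \in \{1,2\}$, using that any pair of points lies in a unique triple of the link STS, then sharpens the bound to $M \le v + O(1)$.

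To finish, I would distinguish further on whether the subfamily $Q \setminus Q_{p^*}$ has a common point $q$. If so, $|Q \setminus Q_{p^*}| \le r_q \le M$ yields $|Q| \le 2M$, which combined with the refined bound on $M$ contradicts $|Q| \ge (v-1)(v-2)/6$ for $v \ge 16$. If not, one iterates the argument on $Q \setminus Q_{p^*}$ or introduces a third off-center block to tighten the bound on $M$ further. The main obstacle is the \emph{near-pencil} regime, where some $r_{p^*}$ is nearly maximal but no single point sits in every block: elementary counting via $\sum_p r_p = 4|Q|$ and Cauchy--Schwarz is too weak by a factor of order $v$, so the sharpening must exploit the link-STS structure and in particular the defining SQS property that any three points lie in a unique block. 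Getting the threshold down to $v = 16$ rather than the much larger value implicit in \cite{R82} will depend on the explicit constants obtained in this link-STS count.
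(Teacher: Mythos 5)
Your outline reproduces the easy part of the paper's argument but leaves the hard part unproved. The first two branches (exactly one block of $Q$ misses the maximum-replication point $p^*$; the blocks missing $p^*$ all pass through a common point $q$, so that the pair $\{p^*,q\}$ meets every block) correspond to the paper's cases of ``a point in all blocks but one'' and ``a covering pair,'' and your counts there are essentially sound. But note how tight the numbers are at $v=16$, where the target is $(v-1)(v-2)/6=35$: your bound $|Q|\le 2M$ with $M\le v+O(1)$ gives $36$ if the constant is $2$ and only $34$ if you pin it to $1$ (which requires subtracting the doubly counted link triple through the two points of $B_1^*\cap B_2^*$, and checking the $|B_1^*\cap B_2^*|=1$ case separately, where one gets roughly $v/2+8$). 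As written, ``$M\le v+O(1)$ contradicts $|Q|\ge (v-1)(v-2)/6$ for $v\ge 16$'' is not established; the paper avoids this fragility by counting blocks through $x$ but not $y$, through $y$ but not $x$, and through both, obtaining $3v/2+5$.

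The genuine gap is your final branch, where $Q\setminus Q_{p^*}$ has no common point (and, more generally, where no pair of points meets every block of $Q$ --- note also that a pair avoiding $p^*$ could still cover $Q$, a configuration your split does not mention). ``Iterate the argument or introduce a third off-center block'' is not a proof, and the constants show it cannot be made to work in the naive way: three blocks missing $p^*$ with no common point bound only $r_{p^*}$, by a constant on the order of $30$--$40$, and the trivial consequence $|Q|\le 1+4(M-1)$ then lands far above $35$; iterating on $Q\setminus Q_{p^*}$ gives no obvious terminating bound on $|Q|$ itself. This regime is exactly where the paper has to work hardest: it first shows that every pair of points lies in at most $4$ blocks of $Q$ (using a block of $Q$ disjoint from the pair), deduces that every point lies in at most $13$ blocks, and then, via a pigeonhole argument producing a pair in at least $3$ blocks, bounds the blocks meeting that pair by $23$ and those avoiding it by $8$, for a total of $31<35$. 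Without an argument of this precision your proof does not close, so the key idea of the lemma is still missing from your proposal.
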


\begin{proof}
  Consider a clique $Q$ such that no point occurs in every block of
  $Q$.  First note that any pair of points is contained in (v-2)/2
  blocks of an SQS(v) and therefore in at most (v-2)/2 blocks of Q.

  We consider the size of a nonempty $Q$ in three separate cases.
\begin{enumerate}
\item\label{it:sp} There is a point $x$ that occurs in every block of
  $Q$ except one:

  Assume that $x \not\in \{a, b, c, d\} \in Q$.  Since $Q$ is a clique
  in the block graph, every block of $Q$ containing $x$ contains at
  least one of the pairs $\{x, a\}, \{x, b\}, \{x, c\}, \{x,
  d\}$. From the fact that each pair occurs in at most $(v-2)/2$
  blocks, it follows that $|Q| \le 4(v-2)/2 + 1 = 2v-3$.

\item There is a pair of points $\{x, y\}$ that intersects every block
  of $Q$, but no point occurs in $|Q|-1$ blocks:

  There are at least two blocks that do not contain $x$; let $B_1$ and
  $B_2$ be two such blocks. Since $x \not\in B_1$ and $x \not\in B_2$,
  by the assumption $y \in B_1\cap B_2$.  If $|B_1\cap B_2| = 2$, $B_1
  = \{y,a,b,c\}$ and $B_2 = \{y,a,d,e\}$ with distinct elements
  $a,b,c,d,e$. Any block that contains $x$ but not $y$ must contain
  either $a$ (there are at most $(v-2)/2$ such blocks), or $b$ and $d$
  (at most 1), or $b$ and $e$ (at most 1), or $c$ and $d$ (at most 1),
  or $c$ and $e$ (at most 1), so there are at most $(v-2)/2 + 4$
  blocks that contain $x$ but not $y$. On the other hand, if $|B_1\cap
  B_2| = 1$, then $B_1 = \{y,a,b,c\}$ and $B_2 = \{y,d,e,f\}$, and we
  get at most 9 blocks containing $x$ and intersecting $B_1$ and
  $B_2$, one for each pair with one element taken from $\{a,b,c\}$ and
  the other from $\{d,e,f\}$.  An upper bound for the number of blocks
  containing $x$ but not $y$ is then $\max\{9,v/2+3\} = v/2+3$ as
  $v\ge 16$.

  By the same argument there are at most $v/2+3$ blocks that contain
  $y$ but not $x$. Finally, at most $(v-2)/2$ blocks contain both $x$
  and $y$, so $|Q| \le (v-2)/2 + 2(v/2+3) = 3v/2 + 5$.

\item For every pair of points there is a block of $Q$ that does not
  intersect the pair:

  (Note that in this case no point occurs in $|Q|-1$ blocks).  Any
  pair of points may occur in at most $4$ blocks of $Q$, since $Q$
  contains a block $B$ that does not intersect the pair, and each
  block that contains the pair also contains a point of $B$.

  Take any point $x$. There are at least two blocks that do not
  contain $x$. If these blocks intersect in two points, say $B_1 =
  \{a,b,c,d\}$ and $B_2 = \{a,b,e,f\}$, we get that each block
  containing $x$ must contain $a$ (at most 4 blocks), $b$ (at most 4),
  $c$ and $e$ (at most 1), $c$ and $f$ (at most 1), $d$ and $e$ (at
  most 1), or $d$ and $f$ (at most 1), giving a total of at most 12
  blocks. Similarly, for the situation with one point in the
  intersection, $B_1 = \{a,b,c,d\}$, $B_2 = \{a,e,f,g\}$, we get an
  upper bound of $4+3^2=13$ blocks.  Thus any point occurs in at most
  $13$ blocks.

  If each point occurs in at most $8$ blocks, we have $|Q| \le
  1+4(8-1) = 29$ as any block must intersect a given block. Assuming
  that there is a point $x$ occurring in at least $9$ blocks, and
  considering blocks containing $x$ and intersecting a block $B$ that
  does not contain $x$, we get by the pigeonhole principle that some
  pair $\{x,y\}$ with $y \in B$ must occur it at least $3$ blocks. Now
  consider a block $\{x,y,a,b\} \in Q$.  There are at most $2\cdot 13
  - 3 = 23$ blocks that intersect $\{x,y\}$. By considering blocks
  intersecting three blocks $\{x,y,a,b\}$, $\{x,y,c,d\}$, and
  $\{x,y,e,f\}$, one obtains that a block that does not intersect
  $\{x,y\}$ must contain one of $2^3 = 8$ sets, $\{a,c,e\}$,
  etc. Moreover, since no two blocks may intersect in three points,
  their total number is at most 8. Summing up the number of blocks
  that intersect $\{x,y\}$ and those that do not, we get that $|Q| \le
  23+8=31$.
\end{enumerate}

Combining the results above, we conclude that $|Q| \le \max(2v-3,
3v/2+5, 31) < (v-1)(v-2)/6$ when $v \ge 16$, and the result follows.
\end{proof}

\begin{Lem}\label{lem:SQS}
  For $v \ge 16$, an $\SQS(v)$ can be reconstructed (up to
  isomorphism) from its block graph.
\end{Lem}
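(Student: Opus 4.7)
The plan is to identify the points of the $\SQS(v)$ with the maximum cliques of its block graph and then read off the incidences from this identification. For each point $x$, the \emph{pencil} $Q_x$ consisting of all blocks containing $x$ is automatically a clique in the block graph, since any two such blocks intersect at $x$; and a routine replication-number count gives $|Q_x| = (v-1)(v-2)/6$.

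Next, I would apply Lemma~\ref{lem:cliquesize} to show that these pencils are precisely the maximum cliques. Any clique with no common point has, by that lemma, strictly fewer than $(v-1)(v-2)/6$ members, while any clique with a common point $x$ is contained in $Q_x$ and so has at most $(v-1)(v-2)/6$ members, with equality exactly when it equals $Q_x$. A short check that the map $x\mapsto Q_x$ is injective then completes the identification of points with maximum cliques: $Q_x\cap Q_y$ consists of the $(v-2)/2$ blocks through both $x$ and $y$, and $(v-2)/2 < (v-1)(v-2)/6$ for $v\ge 16$, so $Q_x\ne Q_y$.

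Finally, the incidence relation is recovered by declaring that point $x$ lies in block $B$ iff the vertex $B$ belongs to the maximum clique associated with $x$; this reconstructs the $\SQS(v)$ up to isomorphism. The conceptual difficulty of the argument (bounding the size of cliques that are not pencils) has already been handled in Lemma~\ref{lem:cliquesize}, so the remaining work here is essentially bookkeeping.
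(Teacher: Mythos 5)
Your proposal is correct and follows essentially the same route as the paper: both identify the points with the maximum cliques of the block graph, using the fact that the pencil of blocks through a point is a clique of size $(v-1)(v-2)/6$ (the derived $\STS(v-1)$) while Lemma~\ref{lem:cliquesize} rules out larger cliques of any other type. Your additional injectivity and incidence-recovery remarks are just the bookkeeping the paper leaves implicit.
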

\begin{proof}
  The blocks that contain a specified point form a clique of size
  $(v-1)(v-2)/6$, and the clique corresponds to the blocks of a
  derived $\STS(v-1)$. By Lemma \ref{lem:cliquesize}, other types of
  cliques cannot be this large, so an $\SQS(v)$ can be reconstructed
  from its block graph by finding maximum cliques and identifying them
  with points.
\end{proof}

We have now made all preparations for the main result.

\begin{Thm}\label{Thm:extended}
  An extended $1$-perfect code can be reconstructed (up to
  equivalence) from its minimum distance graph.
\end{Thm}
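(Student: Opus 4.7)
The plan is as follows. After translating so that $\mathbf{0} \in C$ (which leaves the minimum distance graph unchanged) and invoking Theorem~\ref{thm:wi} to recover every pairwise Hamming distance from the graph, I reconstruct $C$ up to equivalence in two phases.

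\emph{Phase 1 (coordinate labeling).} The neighbors of $\mathbf{0}$ in the minimum distance graph are precisely the weight-$4$ codewords, each corresponding to a block (its support) of the neighborhood SQS of $\mathbf{0}$. Two such blocks intersect iff the corresponding codewords are at Hamming distance $4$ or $6$, and this is detectable via graph edges together with Lemma~\ref{lem:dist6}. Feeding the resulting block graph into Lemma~\ref{lem:SQS} recovers the SQS. I then identify its $n$ abstract points with the $n$ coordinates of $\F_2^n$ through an arbitrary bijection; this is the coordinate permutation we are free to choose within the equivalence class, and it determines $\supp(\mathbf{y})$ for every weight-$4$ codeword $\mathbf{y}$.

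\emph{Phase 2 (propagation).} I extend the labeling by induction on the Hamming distance $2k$ from $\mathbf{0}$. For $\mathbf{z}$ at distance $2k \ge 6$, the counting inside the proof of Theorem~\ref{thm:wi} supplies a neighbor $\mathbf{v}$ of $\mathbf{z}$ at distance $2(k-1)$ from $\mathbf{0}$; by induction $\supp(\mathbf{v})$ is known. I reconstruct the neighborhood SQS of $\mathbf{v}$ from the minimum distance graph exactly as in Phase~1. Every already-labeled neighbor $\mathbf{u}$ of $\mathbf{v}$ contributes a block $\supp(\mathbf{u}) \triangle \supp(\mathbf{v})$ of $\mathbf{v}$'s SQS whose coordinate identity is known, paired with its abstract realization in the reconstructed SQS. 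These matches pin down the bijection between abstract points of the reconstructed SQS and the coordinates; then the abstract block assigned to $\mathbf{z}$ yields $\supp(\mathbf{z}-\mathbf{v})$, so that $\supp(\mathbf{z}) = \supp(\mathbf{v}) \triangle \supp(\mathbf{z}-\mathbf{v})$.

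The expected main obstacle is verifying that in Phase~2 the known blocks are enough to determine the bijection between abstract points of $\mathbf{v}$'s SQS and coordinates --- or, more carefully, that any two valid bijections differ by an SQS-automorphism fixing every known block, so that the recovered $\supp(\mathbf{z})$ is well-defined up to overall code equivalence. The inductive hypothesis supplies a growing pool of labeled neighbors of $\mathbf{v}$, and for small $k$ the weight-$4$ neighbors of $\mathbf{v}$ carried over from Phase~1 already cover every coordinate among the known blocks. Lengths $n < 16$ fall outside Lemma~\ref{lem:SQS}; the only nontrivial such length is $n=8$, handled by uniqueness of the extended Hamming code.
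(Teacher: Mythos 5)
Your overall architecture matches the paper's (recover the neighborhood SQS of the base codeword via Lemmas~\ref{lem:dist6} and~\ref{lem:SQS}, then propagate coordinate labels outward by induction on distance from $\mathbf{0}$), but the two places where you defer the work are exactly where the paper's proof lives, and one auxiliary claim is simply false. The proof of Theorem~\ref{thm:wi} does \emph{not} supply a neighbor of $\mathbf{z}$ at distance $2(k-1)$ from $\mathbf{0}$; its dichotomy is ``either a neighbor at distance $2k-4$, or $\binom{2k}{3}$ neighbors at distance $2k-2$,'' and a neighbor at distance exactly $2k-2$ need not exist. Concretely, in the length-$16$ extended Hamming code (the Reed--Muller code $\mathrm{RM}(2,4)$) the weight-$8$ codeword supported on a $3$-flat has graph neighbors only of weights $4$, $8$ and $12$, because a $2$-flat meets a $3$-flat in $0$, $2$ or $4$ points. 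This particular defect is repairable---any neighbor strictly closer to $\mathbf{0}$ serves as your $\mathbf{v}$, and the dichotomy always provides one---but as written the inductive step fails.

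The genuine gap is the one you label an ``expected obstacle'' and then do not close: how the abstract points of the reconstructed SQS of $\mathbf{v}$ are matched to actual coordinates. That the known blocks ``cover every coordinate'' is both unproven in general (for $\mathbf{v}$ of weight $6$, say, there need not be a weight-$4$ codeword through a given outside coordinate meeting $\supp(\mathbf{v})$ in three points) and insufficient: to recognize which maximum clique of the block graph belongs to a coordinate $r$ you need known blocks that \emph{isolate} $r$, in particular for the coordinates of $\supp(\mathbf{z}-\mathbf{v})$. Your fallback---that two valid bijections differ by an SQS-automorphism fixing the known blocks, so $\supp(\mathbf{z})$ is well defined ``up to overall code equivalence''---does not work either: the labeling must be consistent with all previously reconstructed codewords and with every later step, and independent local relabelings at different codewords need not glue into a single global coordinate permutation, so a local ambiguity is not absorbed into equivalence. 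The paper's proof is precisely the device that removes this ambiguity: for every coordinate $r$ it exhibits a triple $\{i,j,k\}\subset\supp(\mathbf{x})$ with $\{i,j,k,r\}$ not a block, hence three already-reconstructed neighbors whose difference supports intersect exactly in $\{r\}$ as in (\ref{eq:r}); the unique maximum clique containing those three blocks is then the clique of $r$, which labels every block of the local SQS, including the one giving $\supp(\mathbf{z}-\mathbf{v})$. (The same argument ports to your setting, since those three neighbors are at distance at most that of the codeword being processed, hence already known.) Without this step, or an equivalent one, your Phase~2 is an outline rather than a proof.
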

\begin{proof}
  For lengths $n \le 8$ the claim is trivial as these codes are
  unique, so we assume that $n \ge 16$.

  Identify an arbitrary vertex with the all-zero codeword ${\bf 0}$.
  By Lemma~\ref{lem:dist6} we can construct the block graph of the
  neighborhood SQS of $\mathbf{0}$, and by Lemma~\ref{lem:SQS} the
  neighborhood SQS itself. Now we have reconstructed all codewords
  with weight at most $4$.

  The codewords with weight $6$ can be recognized by
  Lemma~\ref{lem:dist6} and reconstructed as follows.  Assume that
  $\mathbf{x}$ is such a codeword.  If $x_i = 1$, then $\mathbf{x}$
  has ${5 \choose 2} = 10$ neighbors $\mathbf{y}$ with $y_i = 1$,
  $\wt(\mathbf{y}) = 4$; if $x_i = 0$, then an upper bound for the
  number of such neighbors is given by the maximum size of a code of
  length 6, constant weight 3, and minimum distance 4, which is 4.

  We proceed with induction on the weight of codewords. Assume that we
  have reconstructed all codewords with weight at most $w$, $w \ge 6$,
  and let $\mathbf{x}$ be a codeword with weight $w$.

  For each coordinate $r$ there is a set $\{i, j, k\} \subset
  \supp(\mathbf{x})$ such that $\{i, j, k, r\}$ is not a block of the
  neighborhood SQS of $\mathbf{x}$.  Accordingly, $\mathbf{x}$ has
  three distinct neighbors $\mathbf{v}, \mathbf{y}, \mathbf{z}$ such
  that $\{r, i, j\} \subset \supp(\mathbf{x}-\mathbf{v})$, $\{r, i,
  k\} \subset \supp(\mathbf{x}-\mathbf{y})$, and $\{r, j, k\} \subset
  \supp(\mathbf{x}-\mathbf{z})$.  Each of $\mathbf{v}, \mathbf{y},
  \mathbf{z}$ has weight at most $w$, and hence those codewords are
  known. Furthermore,
  \begin{equation} \label{eq:r}
    \{r\} = \supp(\mathbf{x}-\mathbf{v})\cap \supp(\mathbf{x}-\mathbf{y})
            \cap \supp(\mathbf{x}-\mathbf{z}).
  \end{equation}

  Consider the block graph of the neighborhood SQS of $\mathbf{x}$,
  and the maximum cliques from Lemma~\ref{lem:SQS}. Using~(\ref{eq:r})
  we can recognize the clique corresponding to the coordinate $r$. Now
  we know which neighbors of $\mathbf{x}$ differ from $\mathbf{x}$ in
  that coordinate. By repeating this for every $r$ we can reconstruct
  the codewords corresponding to the neighbors of $\mathbf{x}$, and
  the result follows as each codeword $\mathbf{y}$ that is not the
  all-one word has a neighbor of weight $\wt(\mathbf{y})-2$.
\end{proof}

\begin{Cor}
  Weakly isometric extended 1-perfect codes are equivalent.
\end{Cor}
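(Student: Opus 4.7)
The plan is to observe that the corollary is an almost immediate consequence of Theorem~\ref{Thm:extended}, with the only content being an unwinding of the definition of weak isometry. A weak isometry $I : C_1 \to C_2$ is a bijection that preserves the relation ``Hamming distance equals the minimum distance''; in other words, $I$ is exactly an isomorphism between the minimum distance graph of $C_1$ and the minimum distance graph of $C_2$, viewed as a map on vertex sets. So the existence of a weak isometry between two extended 1-perfect codes is equivalent to their minimum distance graphs being isomorphic.

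Given that reformulation, I would simply invoke Theorem~\ref{Thm:extended}: it says that each extended 1-perfect code is determined, up to equivalence, by its minimum distance graph. Thus if the minimum distance graphs of $C_1$ and $C_2$ are isomorphic, the reconstruction procedure produces equivalent codes from both graphs, and hence $C_1$ and $C_2$ are themselves equivalent. Combining this with the previous sentence gives the corollary.

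There is no real obstacle here; the heavy lifting (the case analysis of Lemma~\ref{lem:cliquesize}, the SQS reconstruction in Lemma~\ref{lem:SQS}, and the weight-by-weight induction in Theorem~\ref{Thm:extended}) has already been done. The proof itself can essentially be a one-liner of the form ``Follows from Theorem~\ref{Thm:extended}, since a weak isometry between $C_1$ and $C_2$ is the same thing as an isomorphism of their minimum distance graphs.'' If a slightly more expansive rendering is desired, one could note in passing that this corollary is strictly stronger than Theorem~\ref{thm:wi} (isometry) and removes the length restriction $n \geq 256$ that appeared in the earlier equivalence theorem, so it subsumes both previous isometry/equivalence results for extended 1-perfect codes.
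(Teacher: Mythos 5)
Your proposal is correct and matches the paper's intent exactly: the paper states this corollary without further argument precisely because, as you note, a weak isometry is the same thing as an isomorphism of minimum distance graphs, so the statement is an immediate consequence of Theorem~\ref{Thm:extended}.
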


\section{Reconstructing 1-Perfect Codes}

\label{sec:perfect}

We will handle the problem of reconstructing a $1$-perfect code from
its minimum distance graph by reducing it to the problem of
reconstructing an extended $1$-perfect code from its minimum distance
graph.

\begin{Lem}\label{lem:dist4}
  The codewords with Hamming distance $4$ can be recognized from the
  minimum distance graph of a $1$-perfect code.
\end{Lem}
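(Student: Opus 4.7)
The plan is to recognize distance-$4$ pairs as the non-adjacent vertex pairs in the minimum distance graph $G$ that share exactly six common neighbors. I first verify that any pair at Hamming distance $4$ has exactly six common neighbors in $G$, and then show that every other possible Hamming distance between non-adjacent codewords yields strictly fewer than six.

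For the forward direction, given $\mathbf{x}, \mathbf{y}$ with $d_H(\mathbf{x}, \mathbf{y}) = 4$, set $S := \supp(\mathbf{x}-\mathbf{y})$. No block $B$ of the neighborhood $\STS$ of $\mathbf{x}$ can be contained in $S$, for otherwise the codeword $\mathbf{z}$ with $\supp(\mathbf{x}-\mathbf{z}) = B$ would satisfy $d_H(\mathbf{z}, \mathbf{y}) = |S \setminus B| = 1$, contradicting the minimum distance of $3$. Hence each of the ${4 \choose 2}=6$ pairs $\{i,j\} \subset S$ extends in the $\STS$ to a block $\{i,j,k\}$ with $k \notin S$, yielding a codeword at Hamming distance $3$ from both $\mathbf{x}$ and $\mathbf{y}$. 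A short symmetric-difference calculation shows that conversely every common neighbor $\mathbf{z}$ of $\mathbf{x}$ and $\mathbf{y}$ arises this way, with $\{i,j\}$ recovered as $\supp(\mathbf{x}-\mathbf{z}) \cap S$; the six common neighbors produced are therefore distinct.

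For the reverse direction, suppose $d := d_H(\mathbf{x}, \mathbf{y}) \notin \{3, 4\}$. If $d$ is odd, there are no common neighbors, since the symmetric difference of two supports of size $3$ has even cardinality. If $d \geq 8$, any common neighbor $\mathbf{z}$ would force $|\supp(\mathbf{x}-\mathbf{z}) \cap \supp(\mathbf{x}-\mathbf{y})| = d/2 \geq 4$, impossible as $|\supp(\mathbf{x}-\mathbf{z})| = 3$. The only remaining case is $d=6$, where common neighbors correspond bijectively to blocks of the neighborhood $\STS$ of $\mathbf{x}$ contained in the $6$-set $\supp(\mathbf{x}-\mathbf{y})$; since any five such blocks would cover all ${6 \choose 2}=15$ pairs and hence constitute an $\STS(6)$, which does not exist, at most four common neighbors can appear.

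The principal obstacle is the $d=6$ case, since every other non-edge distance contributes no common neighbors for purely parity or cardinality reasons. The bound of four blocks in a $6$-set, obtained from the non-existence of $\STS(6)$ (as $6 \not\equiv 1, 3 \pmod 6$), provides the strict gap below six that distinguishes $d = 4$ from $d = 6$.
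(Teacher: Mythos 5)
Your proposal is correct and takes essentially the same approach as the paper: recognize distance-$4$ pairs by counting common neighbours, with six at Hamming distance $4$, at most four at distance $6$, and none at any other distance. The only cosmetic difference is that you justify the bound of four in the distance-$6$ case via the non-existence of an $\STS(6)$, whereas the paper simply notes that at most four blocks of the neighborhood $\STS$ fit inside the $6$-set (a Pasch configuration attaining the bound).
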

\begin{proof}
  If codewords $\mathbf{x}, \mathbf{y}$ have Hamming distance $4$,
  then their neighborhoods intersect in ${4 \choose 2} = 6$ vertices,
  since for any two coordinates of $\supp(\mathbf{x} + \mathbf{y})$
  there is one neighbor of $\mathbf{x}$ which differs from
  $\mathbf{x}$ in those coordinates.

  If the codewords are at distance $6$, size of the intersection of
  their neighborhoods is at most $4$ (attained by a Pasch
  configuration), and for other distances the neighborhoods do not
  intersect.
\end{proof}

\begin{Thm}\label{Thm:perfect}
  A $1$-perfect code can be reconstructed (up to equivalence) from its
  minimum distance graph.
\end{Thm}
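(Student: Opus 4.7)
The plan is to reduce the problem to Theorem~\ref{Thm:extended} by passing through the extended code $C^*$: build the minimum distance graph of $C^*$, reconstruct $C^*$ from it, and then puncture back to obtain $C$. Starting from the minimum distance graph $G$ of $C$, I would use Lemma~\ref{lem:dist4} to identify all pairs of codewords at Hamming distance $4$ in $C$ and add them as new edges to $G$, obtaining a graph $G^*$. Since extending a $1$-perfect code preserves even Hamming distances and raises each odd Hamming distance by $1$, the edges of $G^*$ are exactly the pairs at distance $4$ in $C^*$, so $G^*$ is the minimum distance graph of $C^*$.

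Applying Theorem~\ref{Thm:extended} to $G^*$ then yields a reconstructed code $\tilde{C}^* \cong C^*$, together with a bijection $v \mapsto \tilde{\mathbf{c}}_v$ between the vertices of $G^*$ and the codewords of $\tilde{C}^*$. To recover $C$, I would search for a coordinate $i$ of $\tilde{C}^*$ such that, for every edge $\{v,w\}$ of $G^*$, the inequality $(\tilde{\mathbf{c}}_v)_i \neq (\tilde{\mathbf{c}}_w)_i$ is equivalent to $\{v,w\}$ being an edge of $G$. Such a coordinate must exist: if $(\pi,\mathbf{z})$ realizes the equivalence $\tilde{C}^* \cong C^*$, then the image $i = \pi(n+1)$ of the parity coordinate works, because distance-$3$ pairs in $C$ extend to pairs in $C^*$ that differ in the parity coordinate, while distance-$4$ pairs in $C$ extend to pairs that agree there. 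Puncturing $\tilde{C}^*$ at this $i$ yields a $1$-perfect code $\tilde{C}$ whose labeled minimum distance graph coincides with $G$.

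The main obstacle is that the coordinate $i$ above need not be uniquely determined by the condition, so a priori different valid choices could produce inequivalent punctured codes. The resolution is to invoke Avgustinovich's theorem~\cite{A01}, which asserts that any two $1$-perfect codes with isomorphic minimum distance graphs are equivalent; applied to $\tilde{C}$ and $C$, which share the labeled graph $G$, this forces $\tilde{C} \cong C$. Thus the reconstruction of $C$ is realized by a graph augmentation, an application of Theorem~\ref{Thm:extended}, and a puncturing whose correctness is certified by Avgustinovich's uniqueness result.
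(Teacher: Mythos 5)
Your reduction is essentially the paper's: use Lemma~\ref{lem:dist4} to add the distance-$4$ pairs as new edges, observe that the augmented graph is the minimum distance graph of the extended code, reconstruct that code via Theorem~\ref{Thm:extended}, and puncture at the parity coordinate; your bookkeeping of the parity coordinate (distance-$3$ pairs of $C$ extend to pairs that differ there, distance-$4$ pairs to pairs that agree there) is correct. The one genuine divergence is the final step. The paper simply detects the parity coordinate as the coordinate whose disagreements among minimum-distance pairs of the extended code are exactly the original edges, while you, worried that such a coordinate might not be unique, certify the puncturing by citing Avgustinovich's theorem \cite{A01}. That is logically sound, since \cite{A01} is an independent prior result, but it imports precisely the statement that the Corollary following this theorem is meant to recover, so it sacrifices the self-contained, constructive character of the argument. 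Moreover the appeal to \cite{A01} is unnecessary, because the coordinate you seek is in fact unique: if two distinct coordinates $i \neq j$ both had your defining property, then for every codeword of the reconstructed extended code each block of its neighborhood SQS would contain $i$ if and only if it contains $j$; but in an $\SQS(n)$ a point lies in $(n-1)(n-2)/6$ blocks while a pair lies in only $(n-2)/2$ of them, so for $n \ge 16$ some block separates $i$ from $j$, a contradiction. Hence the coordinate satisfying your condition is unique, it is the image of the parity coordinate under the equivalence furnished by Theorem~\ref{Thm:extended}, and puncturing there yields a code equivalent to $C$ with no external input.
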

\begin{proof}
  Add new edges between codewords with Hamming distance $4$ (Lemma
  \ref{lem:dist4}).  This gives the minimum distance graph for the
  extended code (obtained by adding a parity coordinate).  Using
  Theorem~\ref{Thm:extended} we can reconstruct the extended code. All
  codewords connected by new edges in the first step of the proof
  differ in the parity coordinate, which can thereby be detected. By
  puncturing in the parity coordinate we get the 1-perfect code.
\end{proof}

\begin{Cor}
  Weakly isometric 1-perfect codes are equivalent.
\end{Cor}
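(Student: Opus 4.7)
The plan is to observe that the corollary is essentially immediate once Theorem~\ref{Thm:perfect} is in hand: a weak isometry between codes is, by definition, a bijection that preserves precisely the pairs at the minimum distance, which is exactly the data of an isomorphism between the minimum distance graphs. So weakly isometric $1$-perfect codes have isomorphic minimum distance graphs.

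Next I would invoke Theorem~\ref{Thm:perfect}, which says that a $1$-perfect code is determined up to equivalence by its minimum distance graph. Applied to the two weakly isometric codes $C_1$ and $C_2$, whose minimum distance graphs are isomorphic by the previous step, this yields that $C_1$ and $C_2$ are equivalent.

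There is no real obstacle here. The only thing to verify is that the reconstruction procedure of Theorem~\ref{Thm:perfect} depends only on the abstract isomorphism type of the minimum distance graph, not on any additional labelling of vertices; this is clear from its proof, since every step (reducing to the extended code via Lemma~\ref{lem:dist4}, then applying Theorem~\ref{Thm:extended}) is phrased purely in graph-theoretic terms, with arbitrary identification of a vertex with the all-zero codeword. Thus the corollary follows in one line by citing Theorem~\ref{Thm:perfect}, in direct analogy with the corollary following Theorem~\ref{Thm:extended}.
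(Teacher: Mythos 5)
Your proposal is correct and matches the paper's (implicit) argument exactly: the corollary is stated without proof precisely because a weak isometry is the same data as an isomorphism of minimum distance graphs, so Theorem~\ref{Thm:perfect} immediately gives equivalence. Your added remark that the reconstruction depends only on the abstract graph is a reasonable sanity check but not a new idea beyond the paper's route.
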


\section{Automorphism Groups}\label{sec:autom}

The fact that the automorphism group of a 1-perfect code is isomorphic
to the automorphism group of its minimum distance graph (for lengths
$n \geq 15$) follow implicitly from \cite{A01,A95,SAHH98}, and the
analogous result for extended 1-perfect codes (for lengths $n \geq
256$) from \cite{SA03} combined with Theorem \ref{thm:wi}.  The
current study enables direct and concise proofs of these facts
(expanded to lengths $n \geq 16$ for extended codes).

\begin{Thm}\label{thm:extautom}
  The automorphism group of an extended $1$-perfect code of length $n
  \geq 16$ is isomorphic to that of its minimum distance graph.
\end{Thm}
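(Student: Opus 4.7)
The natural map $\Psi : \Aut(C) \to \Aut(G(C))$ sending a pair $(\pi, \mathbf{t})$ to the induced permutation of $C$ is a group homomorphism, since any code automorphism preserves Hamming distances and hence edges of $G(C)$. The plan is to show $\Psi$ is both injective and surjective.

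For injectivity, normalize so that $\mathbf{0} \in C$. A pair $(\pi, \mathbf{t})$ acting trivially on $C$ satisfies $\mathbf{t} = (\pi, \mathbf{t})(\mathbf{0}) = \mathbf{0}$, so $\pi$ fixes every codeword pointwise. If $\pi(i) \ne i$ for some $i$, then $x_i = x_{\pi^{-1}(i)}$ for every codeword; however, among the $(n-1)(n-2)/6$ blocks of the neighborhood SQS of $\mathbf{0}$ that contain $i$, only $(n-2)/2$ also contain $\pi^{-1}(i)$, so for $n \ge 8$ a weight-$4$ codeword with $1$ in position $i$ and $0$ in position $\pi^{-1}(i)$ exists, a contradiction.

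For surjectivity, let $\phi \in \Aut(G(C))$. By Theorem~\ref{thm:wi}, $\phi$ preserves all Hamming distances. Normalize so that $\mathbf{0} \in C$ and put $\mathbf{t} := \phi(\mathbf{0})$. The restriction of $\phi$ to the weight-$4$ neighbors of $\mathbf{0}$ is a bijection onto the weight-$4$ codewords of $C - \mathbf{t}$; adjacency in $G(C)$ records $2$-point block intersections in the neighborhood SQS, and Lemma~\ref{lem:dist6} makes Hamming distance $6$ (i.e., $1$-point intersections) detectable, together yielding the full block graph. Thus $\phi$ induces an isomorphism between the block graphs of the neighborhood SQSs of $\mathbf{0}$ in $C$ and of $\mathbf{0}$ in $C - \mathbf{t}$, both on point set $\{1, \ldots, n\}$. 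By the proof of Lemma~\ref{lem:SQS}, maximum cliques identify points, so this isomorphism lifts to a coordinate permutation $\pi$ satisfying $\phi(\mathbf{y}) = \pi(\mathbf{y}) + \mathbf{t}$ for every weight-$4$ codeword $\mathbf{y} \in C$.

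I would extend this equality to all of $C$ by induction on $w = \wt(\mathbf{x})$, mirroring the reconstruction of Theorem~\ref{Thm:extended}. Given the equality for codewords of weight at most $w$, take $\mathbf{x}' \in C$ of weight $w + 2$ and choose a weight-$w$ neighbor $\mathbf{x}$ of $\mathbf{x}'$ in $G(C)$ (this exists for $\mathbf{x}' \ne \mathbf{1}$; the case $\mathbf{x}' = \mathbf{1}$ is handled using a neighbor of weight $n-4$). For each coordinate $r$, the three neighbors $\mathbf{v}, \mathbf{y}, \mathbf{z}$ of $\mathbf{x}$ furnished by~(\ref{eq:r}) have weight at most $w$, so by the inductive hypothesis $\phi$ and $\pi(\cdot) + \mathbf{t}$ coincide on $\mathbf{x}, \mathbf{v}, \mathbf{y}, \mathbf{z}$. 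Since the identification of coordinate $r$ with a maximum clique in the block graph of the neighborhood SQS of $\mathbf{x}$ is purely graph-theoretic, $\phi$ sends the clique labelled $r$ there to the clique labelled $\pi(r)$ in the neighborhood SQS of $\phi(\mathbf{x})$. Reading off which cliques contain the block corresponding to $\mathbf{x}'$ yields $\supp(\phi(\mathbf{x}) - \phi(\mathbf{x}')) = \pi(\supp(\mathbf{x} - \mathbf{x}'))$, which rearranges to $\phi(\mathbf{x}') = \pi(\mathbf{x}') + \mathbf{t}$. The main obstacle is verifying the equivariance of this coordinate-identification step, which holds because the identifying triples $\mathbf{v}, \mathbf{y}, \mathbf{z}$ already lie in the weight range where $\phi$ and $\pi(\cdot) + \mathbf{t}$ coincide by induction.
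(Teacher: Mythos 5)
Your strategy is the paper's own: map $\Aut(C)$ to $\Aut(G)$ in the obvious way and invert the homomorphism by tracing how a graph automorphism moves $\mathbf{0}$ and the point cliques used in the reconstruction of Theorem~\ref{Thm:extended}. Your injectivity argument, the recovery of $\pi$ and $\mathbf{t}$ from the weight-$4$ codewords, and the induction step for $w \ge 6$ are sound. The gap is at the very first step of your induction, from weight $4$ to weight $6$ --- exactly the point where the paper's reconstruction inserts a separate argument and starts its induction only at $w \ge 6$. If $\wt(\mathbf{x}')=6$ and $\mathbf{x}$ is a weight-$4$ neighbor of $\mathbf{x}'$, then $B=\supp(\mathbf{x}-\mathbf{x}')$ meets $\supp(\mathbf{x})$ in exactly one coordinate $r_0$, and for $r=r_0$ the triple needed for~(\ref{eq:r}) does not exist: the only triple of $\supp(\mathbf{x})$ avoiding $r_0$ is $\supp(\mathbf{x})\setminus\{r_0\}$, and $\{i,j,k,r_0\}=\supp(\mathbf{x})$ \emph{is} a block of the neighborhood SQS of $\mathbf{x}$, witnessed by the codeword $\mathbf{0}$. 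Hence your clique identification pins down only the three points $\pi(r)$, $r\in B\setminus\supp(\mathbf{x})$, of $\supp(\phi(\mathbf{x})-\phi(\mathbf{x}'))$; distance preservation only forces the fourth point to lie in $\pi(\supp(\mathbf{x}))$, and concluding that it equals $\pi(r_0)$ amounts to knowing that $\pi(\mathbf{x}')+\mathbf{t}$ is a codeword, which is precisely what you are trying to prove. So the induction as written does not get off the ground at $w=4$.

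The gap is local and can be closed by mirroring the paper's separate weight-$6$ argument from Theorem~\ref{Thm:extended}. For $i\in\supp(\mathbf{x}')$ the codeword $\mathbf{x}'$ has ${5 \choose 2}=10$ weight-$4$ neighbors $\mathbf{y}$ with $y_i=1$ (the fourth point of the block through $\{i,j,k\}\subset\supp(\mathbf{x}')$ is forced outside $\supp(\mathbf{x}')$ by the minimum distance); by your base case these map to $\pi(\mathbf{y})+\mathbf{t}$, so $\phi(\mathbf{x}')$ has at least $10$ neighbors at Hamming distance $4$ from $\mathbf{t}$ whose difference with $\mathbf{t}$ has a one in coordinate $\pi(i)$. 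If $\pi(i)\notin\supp(\phi(\mathbf{x}')-\mathbf{t})$, at most $4$ such neighbors are possible (the constant-weight bound in the paper's proof), a contradiction; hence $\pi(\supp(\mathbf{x}'))\subseteq\supp(\phi(\mathbf{x}')-\mathbf{t})$, and since both sets have size $6$, $\phi(\mathbf{x}')=\pi(\mathbf{x}')+\mathbf{t}$ for all weight-$6$ codewords. With that base in place, your induction for $w\ge 6$, including the treatment of the all-one word, goes through as you describe.
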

\begin{proof}
  The automorphisms of the code can be mapped to automorphisms of the
  graph in the obvious fashion. Using the construction of
  Theorem~\ref{Thm:extended}, this homomorphism can be inverted; more
  specifically, we get an automorphism of the code by checking how
  $\alpha \in \Aut(G)$ maps the codeword $\mathbf{0}$ and the cliques
  used in the construction.
\end{proof}

The result for 1-perfect codes now follows easily.

\begin{Thm}\label{thm:prfaut}
  The automorphism group of a $1$-perfect code of length $n \geq 15$
  is isomorphic to that of its minimum distance graph.
\end{Thm}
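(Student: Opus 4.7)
The plan is to mirror the strategy of Theorem~\ref{thm:extautom}, leveraging the reduction established in the proof of Theorem~\ref{Thm:perfect}. Write $G$ for the minimum distance graph of the 1-perfect code $C$ of length $n \geq 15$ and $G'$ for the minimum distance graph of its extension $C'$, which has length $n+1 \geq 16$. The two graphs share the same vertex set via the parity-extension bijection, and $E(G') = E(G) \cup E^+$, where $E^+$ consists of the pairs of codewords of $C$ at Hamming distance $4$.

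Any element of $\Aut(C)$ is a Hamming isometry and so induces an automorphism of $G$; this gives the forward homomorphism $\Aut(C) \to \Aut(G)$. For the inverse, let $\alpha \in \Aut(G)$. By Lemma~\ref{lem:dist4}, the set $E^+$ is determined intrinsically from $G$ (as pairs whose $G$-neighborhoods meet in six vertices), so $\alpha$ preserves $E^+$ and is therefore also an automorphism of $G'$. Theorem~\ref{thm:extautom} then lifts $\alpha$ to a code automorphism $\tilde\alpha = (\pi, \mathbf{z})$ of $C'$.

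The main obstacle is to verify that $\pi$ fixes the parity coordinate $\sigma$ of the extension, since then puncturing $\tilde\alpha$ at $\sigma$ produces an element of $\Aut(C)$ mapping to $\alpha$. The crucial fact is that in $C'$ the pairs in $E(G)$ are exactly the distance-$4$ pairs whose endpoints \emph{differ} in coordinate $\sigma$, while the pairs in $E^+$ are the distance-$4$ pairs \emph{agreeing} in coordinate $\sigma$. Since $\tilde\alpha$ preserves the partition $\{E(G), E^+\}$, a brief computation using $\tilde\alpha(\mathbf{x}') = \mathbf{z} + \pi(\mathbf{x}')$ shows that $x'_\sigma + x'_{\pi^{-1}(\sigma)}$ takes the same value at the two endpoints of every distance-$4$ pair in $C'$. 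Applied to pairs $\{\mathbf{0}, \mathbf{x}'\}$ as $\mathbf{x}'$ ranges over weight-$4$ codewords, this forces each block of the neighborhood $\SQS$ of $\mathbf{0}$ to contain $\sigma$ and $\pi^{-1}(\sigma)$ together or not at all. A routine count in $\SQS(n+1)$ (analogous to those used in Lemma~\ref{lem:cliquesize}) shows that for $n+1 \geq 16$ the number of blocks containing $\sigma$ but not $\pi^{-1}(\sigma)$ is $(n-1)(n-3)/6 > 0$ whenever $\pi^{-1}(\sigma) \neq \sigma$, so we must have $\pi(\sigma) = \sigma$. The two homomorphisms are then readily seen to be mutually inverse.
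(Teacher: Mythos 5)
Your proof is correct and follows essentially the same route as the paper: identify $\Aut(C)$ with the parity-coordinate stabilizer in $\Aut(C')$ and $\Aut(G)$ with the stabilizer of the added edges in $\Aut(G')$, and transfer via Theorem~\ref{thm:extautom}; your SQS counting argument just makes explicit the step the paper leaves implicit, namely that a lifted automorphism whose graph action preserves the new-edge class must fix the parity coordinate. One cosmetic point: $C'$ need not contain $\mathbf{0}$, so the pairs $\{\mathbf{0},\mathbf{x}'\}$ should be $\{\mathbf{c},\mathbf{x}'\}$ for an arbitrary codeword $\mathbf{c}$ and its neighborhood $\SQS$, which changes nothing in the count.
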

\begin{proof}
  We use the construction from Theorem~\ref{Thm:perfect}. Assume that
  extending a $1$-perfect code $C$ with a parity coordinate yields the
  code $C'$. Now $\Aut(C)$ is the subgroup of $\Aut(C')$ that
  stabilizes the parity coordinate. Similarly, if $G$ is the minimum
  distance graph of $C$ and $G'$ the graph constructed in
  Theorem~\ref{Thm:perfect}, $\Aut(G)$ is the subgroup of $\Aut(G')$
  that stabilizes the new edges setwise. By Theorem~\ref{thm:extautom}
  these subgroups are isomorphic, and hence $\Aut(C) \cong \Aut(G)$ as
  well.
\end{proof}

\section{Conclusions}\label{sec:concl}

The result that $1$-perfect and extended $1$-perfect codes can be
reconstructed from their minimum distance graphs is not only of
theoretical interest but also has practical implications.  Several
methods have been used for deciding equivalence of (extended)
$1$-perfect codes~\cite{PL99,OP09,P00}---the most straightforward
method of representing the codes as graphs and deciding isomorphism of
these graphs is rather inefficient~\cite{P00}.  The results obtained
imply that this problem reduces to determining whether their minimum
distance graphs are isomorphic.

\end{document}